\setlist[itemize]{topsep=0pt}
\def\BibTeX{{\rm B\kern-.05em{\sc i\kern-.025em b}\kern-.08em
    T\kern-.1667em\lower.7ex\hbox{E}\kern-.125emX}}
\begin{document}
\newcommand{\smallsectiontwo}[1]{{\bf{\underline{\smash{#1}}}}}
\newcommand\red[1]{\textcolor{red}{#1}}
\newcommand\blue[1]{\textcolor{blue}{#1}}
\newcommand\cyan[1]{\textcolor{cyan}{#1}}
\newcommand\violet[1]{\textcolor{violet}{#1}}
\newcommand\teal[1]{\textcolor{teal}{#1}}
\newcommand\brown[1]{\textcolor{brown}{#1}}

\definecolor{darkgreen}{RGB}{0, 150, 0}   
\definecolor{goldenrod}{RGB}{218, 165, 32} 

\newcommand\green[1]{\textcolor{darkgreen}{#1}}
\newcommand\yellow[1]{\textcolor{goldenrod}{#1}}

\newcommand\hwang[1]{\textcolor{blue}{#1}}

\newcommand\fanchen[1]{\textcolor{blue}{[Fanchen: #1]}}

\newtheorem{theorem}{Theorem}
\newtheorem{lemma}{Lemma}

\newtheorem{obs}{Observation}
\newtheorem{defn}{Definition}
\newtheorem{problem}{Problem}

\newcommand{\smallsection}[1]{{\noindent {\bf{\underline{\smash{#1}}}}}}

\newcommand{\methodthree}{HyperSearch\xspace}

\setlength{\textfloatsep}{0.12cm}
\setlength{\dbltextfloatsep}{0.12cm}
\setlength{\abovecaptionskip}{0.12cm}
\setlength{\skip\footins}{0.15cm}

\title{
HyperSearch: Prediction of New Hyperedges through Unconstrained yet Efficient Search
}

\author{
Hyunjin Choo\textsuperscript{1},
Fanchen Bu\textsuperscript{1},
Hyunjin Hwang\textsuperscript{2}
Young-Gyu Yoon\textsuperscript{1},
Kijung Shin\textsuperscript{1,2}\\
\textsuperscript{1}School of Electrical Engineering and
\textsuperscript{2}Kim Jaechul Graduate School of AI, KAIST, Republic of Korea\\
\{choo, boqvezen97, hyunjinhwang, ygyoon, kijungs\}@kaist.ac.kr
}

\maketitle

\begin{abstract}
Higher-order interactions (HOIs) in complex systems, such as scientific collaborations, multi-protein complexes, and multi-user communications, are commonly modeled as hypergraphs, where each hyperedge (i.e., a subset of nodes) represents an HOI among the nodes.
Given a hypergraph, \textit{hyperedge prediction} aims to identify hyperedges that are either missing or likely to form in the future, and it has broad applications, including recommending interest-based social groups, predicting collaborations, and uncovering functional complexes in biological systems.
However, the vast search space of hyperedge candidates (i.e., all possible subsets of nodes) poses a significant computational challenge, making na\"ive exhaustive search infeasible.
As a result, existing approaches rely on either heuristic sampling to obtain constrained candidate sets or ungrounded assumptions on hypergraph structure to select promising hyperedges.

In this work, we propose \methodthree, a search-based algorithm for hyperedge prediction that \textit{efficiently} evaluates \textit{unconstrained} candidate sets, 
by incorporating two key components:
(1) an \textit{empirically grounded} scoring function derived from observations in real-world hypergraphs and 
(2) an \textit{efficient} search mechanism, where we derive and use an anti-monotonic upper bound of the original scoring function (which is not anti-monotonic) to prune the search space.
This pruning comes with \textit{theoretical guarantees}, ensuring that discarded candidates are never better than the kept ones w.r.t. the original scoring function.
In extensive experiments on 10 real-world hypergraphs across five domains, \methodthree consistently outperforms state-of-the-art baselines, achieving higher accuracy in predicting new (i.e., not in the training set) hyperedges.
\end{abstract}

\begin{IEEEkeywords}
Hypergraph, Hyperedge Prediction, Search
\end{IEEEkeywords}

\section{Introduction}
\label{ch3sec:intro}
    In many complex real-world systems, groups of entities engage simultaneously, forming \textit{higher-order interactions} (HOIs), e.g., scientific collaborations (coauthorship among multiple researchers)~\cite{benson2018simplicial}, multi-protein complexes in biological systems~\cite{klimm2021hypergraphs}, and multi-user communications~\cite{iacopini2022group}.

    \textit{Hypergraphs}, which extend ordinary (pairwise) graphs by allowing hyperedges to connect multiple nodes, offer a natural framework for modeling HOIs.
    Unlike graphs, where an edge links exactly two nodes, hyperedges capture arbitrary-sized relationships, providing a more expressive representation of complex multi-way interactions.
    This enables the discovery of structural patterns and behaviors that are often obscured in pairwise frameworks~\cite{benson2018simplicial,lee2024survey,juul2024hypergraph}.


    Given a hypergraph, \textit{hyperedge prediction} 
    aims to identify hyperedges that are missing or to emerge in the future based on observed data.
    This task is essential for improving predictive capabilities in systems where HOIs play a crucial role.
    Hyperedge prediction has broader applications, including:
    
    \begin{itemize}[leftmargin=*,topsep=0pt]
        \item \textbf{Group recommendation}: In social networks, hyperedge prediction enables the recommendation of groups, each consisting of users sharing common interests or behaviors,
        {enhancing user experience in social media platforms and improving the effectiveness of targeted marketing}~\cite{liben2003link, wang2014link}.
        \item \textbf{Collaboration prediction}: In academic and industrial networks, hyperedge prediction helps identify potential collaborations among researchers or companies with shared interests or expertise, optimizing team formation 
        \cite{wang2014heterogeneous, lande2020link}.
        \item \textbf{Drug discovery}: In protein and gene networks, hyperedge prediction aids in forecasting functional protein complexes or interacting gene groups, providing insights into disease mechanisms and facilitating drug discovery~\cite{jin2023general, saifuddin2023hygnn}.
    \end{itemize}
    
    However, 
    the problem of hyperedge prediction is computationally challenging. The number of potential hyperedges is $\mathcal{O}(2^{n})$ for $n$ nodes, making exhaustive enumeration infeasible for even medium-sized hypergraphs with hundreds of nodes.

    While various methods have been proposed to efficiently identify promising hyperedges \cite{kumar2020hpra,zhang2019hyper,yu2024mhp,hwang2022ahp,yadati2020nhp},
    they
    suffer from two fundamental limitations:
    \begin{itemize}[leftmargin=*,topsep=0pt]
        \item \textbf{{Constrained} candidate sets}: Most deep learning-based methods (e.g., Hyper-SAGNN~\cite{zhang2019hyper} and NHP~\cite{yadati2020nhp}) frame hyperedge prediction as a binary classification task, distinguishing between existing (positive) and nonexistent (negative) hyperedges.
        Since enumerating all potential hyperedges is computationally infeasible, negative hyperedges are typically \textit{sampled heuristically}.
        Consequently, the performance of such methods strongly depends on the quality of these samples, {and how to sample desirable negative hyperedges is a challenging problem itself (see, e.g., AHP ~\cite{hwang2022ahp}).}
        Alternatively, MHP~\cite{yu2024mhp} {assumes a given query node set $Q$ and predicts hyperedges by initializing hyperedges with nodes in $Q$ and filling the missing nodes.}
        However, the predictions by MHP are \textit{constrained by the choice} of the query node set $Q$, and selecting $Q$ is non-trivial in practice, {limiting its flexibility and generalizability}.
        \item \textbf{{Unjustified structural assumptions}}:
        HPRA~\cite{kumar2020hpra} constructs candidate hyperedges by adding ``similar'' nodes based on resource allocation scores. 
        However, these scores \textit{strongly depends on observed structures} and assume specific \textit{local connectivity patterns} {(e.g., the similarity between two nodes is proportional to the number of neighbors they share), \textit{without sufficient justification}}. 
        Such assumptions make the method less flexible and generalizable, particularly when discovering new hyperedges connecting nodes with low structural similarity.
        Moreover, HPRA is limited to predicting hyperedges consisting only of nodes within the same connected component, as the similarity between nodes in different components is zero. 
        This constraint limits its applicability in sparse or disconnected hypergraphs.
    \end{itemize}   

    To address these limitations, we propose \methodthree, a search-based algorithm for hyperedge prediction.
    \methodthree is
    (1) \textit{empirically justified}, i.e., grounded in real-world observations rather than ad-hoc heuristics, enabling generalization beyond observed structures, and 
    (2) \textit{unconstrained yet efficient}, i.e., efficiently explores the vast hyperedge search space without being constrained by predefined or sampled subsets.
    \begin{itemize}[leftmargin=*,topsep=0pt]
        \item \textbf{Empirically justified scores based on observations}:
        The scoring function in \methodthree is designed based on our observations in real-world datasets.
        Specifically, we observe that ground-truth hyperedges have \textit{significant overlap} with observed hyperedges, and we thus assign each candidate hyperedge a score based on the number of observed hyperedges it significantly overlaps with.
        Additionally, when node features or timestamps are available, we incorporate node feature similarity to prioritize candidates consisting of nodes with similar features, and apply time weighting to emphasize recent hyperedges when computing overlaps, based on our observations.
        \item \textbf{Efficient search with an anti-monotonic upper bound}:
        The original scoring function is not anti-monotonic, i.e., a subset may have a lower score than its supersets, rendering naïve pruning strategies ineffective. 
        To address this challenge, we derive an anti-monotonic upper bound of the original scoring function.
        This allows us to efficiently explore the search space with theoretical guarantees, ensuring that discarded candidates are never better than
        the kept ones w.r.t. the original scoring function.
    \end{itemize}


    We conduct extensive experiments on 10 real-world hypergraphs across five domains, which demonstrate the empirical superiority of \methodthree:    
    \begin{itemize}[leftmargin=*,topsep=0pt]
        \item \textbf{Predictive accuracy}: \methodthree consistently outperforms all baseline methods, including deep learning-based and rule-based ones, across diverse datasets, achieving higher accuracy in predicting new hyperedges.
        \item \textbf{Computational efficiency}: \methodthree achieves faster runtime than deep learning-based methods in most cases, while maintaining near-linear scalability with respect to the input hypergraph size. 
        \item \textbf{Component effectiveness}: Ablation studies empirically validate that each component of \methodthree contributes meaningfully to performance.
    \end{itemize}


\smallsection{Reproducibility:}
The code, datasets, and appendix 
are available at \url{https://github.com/jin-choo/HyperSearch}.


    The remainder of this paper is organized as follows.
    In Sect.~\ref{ch3sec:related}, we review related work.
    In Sect.~\ref{ch3sec:prelim}, we introduce preliminary concepts and describe the datasets.
    In Sect.~\ref{ch3sec:observation}, we present empirical observations from real-world datasets.
    In Sect.~\ref{ch3sec:proposed}, we detail our proposed method.
    In Sect.~\ref{ch3sec:experiments}, we review our experiments. 
    In Sect.~\ref{ch3sec:conclusion}, we conclude the paper.

\section{Related Work}
\label{ch3sec:related}


\subsection{Similarity-Based Methods}
Early approaches to hyperedge prediction are mainly based on node similarity measures~\cite{zhang2018beyond, yoon2020much}.
They either extend node similarity measures to hypergraphs or project hypergraphs into pairwise graphs to apply node similarity measures.
HPRA~\cite{kumar2020hpra}, for example, constructs hyperedges by selecting a seed node via preferential attachment (i.e., high-degree nodes are more likely to be chosen) and incrementally adding structurally similar nodes, guided by resource allocation scores.

\smallsection{Limitations.}
While such methods are computationally efficient, they primarily depend on local structural similarity (e.g., pairwise similarity) and are limited in capturing the complex higher-order dependencies inherent in real-world hypergraphs.


\subsection{Deep Learning-Based Methods}
To overcome the limitations of similarity-based approaches, deep learning models have been employed to learn higher-order representations in hypergraphs.
DHNE~\cite{tu2018structural} proposes an MLP-based framework for $k$-uniform hypergraphs.
Hyper-SAGNN~\cite{zhang2019hyper} extends this approach by incorporating a self-attention mechanism with graph neural networks, enabling non-$k$-uniform hyperedge prediction.
NHP~\cite{yadati2020nhp} further generalizes hyperedge prediction to directed hypergraphs and extends it to inductive settings.

\smallsection{Limitations.}
While these methods demonstrate strong performance, they typically rely on predefined candidate sets and require negative sampling strategies for effective training, as described below.

\subsection{Negative Sampling on Hyperedges}
Hyperedge prediction is commonly framed as a binary classification task, where the model distinguishes between observed (positive) and nonexistent (negative) hyperedges.
Given the combinatorial explosion of potential hyperedges,\footnote{The number of potential hyperedges is $\mathcal{O}(2^{n})$ for $n$ nodes.} negative sampling plays a crucial role in defining meaningful training samples, and different negative sampling strategies introduce varying levels of difficulty, directly impacting prediction performance~\cite{patil2020negative}.

To address this challenge, AHP~\cite{hwang2022ahp} introduces a generative adversarial training approach that dynamically generates negative samples during training, mitigating the dependence on heuristics.
In contrast, MHP~\cite{yu2024mhp} proposes a novel masking-based approach that eliminates the need for explicit negative sampling by formulating hyperedge prediction as a masked node prediction task given a query node set and target size.

\smallsection{Limitations.}
However, they still depend on external constraints, such as the choice of query set in MHP, that inherently limit the candidate space.
Consequently, 
their ability to explore hyperedge candidates remains limited, especially when compared to approaches like ours that search the full space without predefined candidates.

    \begin{figure*}
        \centering
    
        \begin{minipage}[c]{.595\textwidth}
            \centering
            \includegraphics[width=0.542\textwidth]{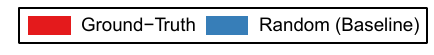}
        \end{minipage}
        \hspace{15pt}
        \begin{minipage}[c]{.325\textwidth}
            \centering
            \includegraphics[width=0.85\textwidth]{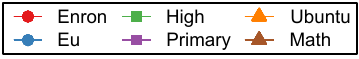}
		\end{minipage}
    
        \vspace{-8pt}
    
        \subfloat[\textbf{Obs. 1. \uline{There is significant structural overlap} \\ \uline{between new hyperedges and observed hyperedges.}}\label{ch3fig:observation1}]{
            \includegraphics[width=0.595\textwidth]{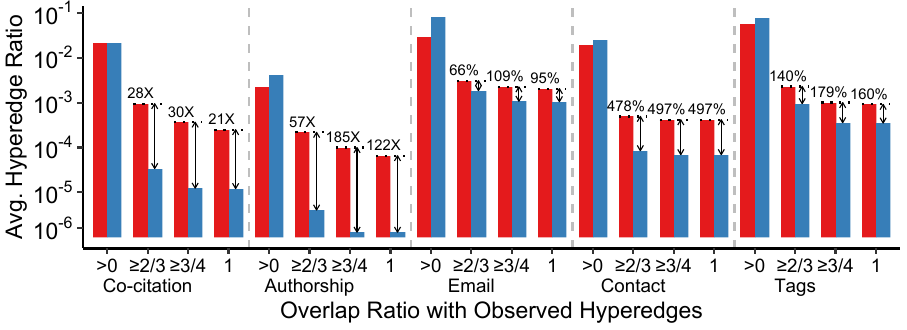}
        }
        \hspace{13pt}
        \subfloat[\textbf{Obs. 2. \uline{Overlap increases as} \\ \uline{the time difference decreases.}}\label{ch3fig:obs_time}]{
            \includegraphics[width=0.325\textwidth]{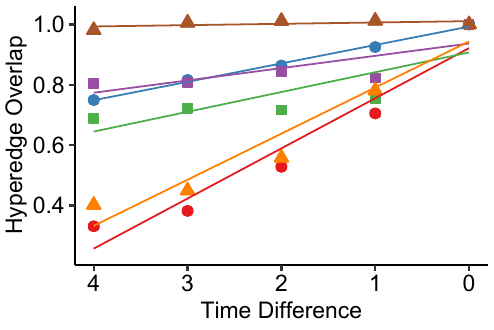}
        }    
        \caption[\textbf{Observations.}]{
            \textbf{Observations.}
            (a) The average proportion of ground-truth new hyperedges with overlap ratios with low threshold ($>0$) is comparable to that of the random ones.
            However, for high thresholds ($\geq 2/3$, $\geq 3/4$, and $1$), the gaps between ground-truth and random ones become significant, suggesting \uline{ground-truth new hyperedges are more likely to substantially overlap with existing hyperedges than random hyperedges are}.
            (b) Structural overlap between hyperedges declines as the time gap increases, suggesting \uline{new hyperedges tend to overlap more substantially with recent existing ones than with earlier ones}.
        }
        \label{ch3fig:observations}
    \end{figure*}

\section{Preliminaries and Datasets}
\label{ch3sec:prelim}

\subsection{Concepts and Problem Statement}
\label{ch3sec:prelim:concepts}
    \subsubsection{Hypergraphs}
        A \textit{hypergraph} $H=(V, E)$ is defined by its \textit{node set} $V=\{ v_1, v_2, \ldots, v_{|V|} \}$ and \textit{hyperedge set} $E=\{ e_1, e_2, \ldots, e_{|E|} \}$.
        Each hyperedge $e_j$ is a non-empty subset of nodes (i.e., $\emptyset \neq e_j \subseteq V$), and its \textit{edge size} $|e_j|$ is the number of nodes in it.
        Each hyperedge $e_j$ can also be associated with an \textit{edge timestamp} $t_j$, if such information is available.
        \subsubsection{Problem Definition (Hyperedge Prediction)} \label{ch3sec:problem}
        \hfill\\
        \noindent
        \fbox{
        \parbox{0.95\columnwidth}{
        \begin{itemize}[leftmargin=*]
        \item \textbf{Given:}
            \begin{enumerate}[leftmargin=*]
                \item a (partially) observed hypergraph $H=(V, E)$\\(optionally with 
                edge timestamps $t_j$'s),
                \item the target number $k$ of hyperedges to predict,
            \end{enumerate}
        \item \textbf{To find:} a set $E^{\prime}$ of $k$ predicted new hyperedges\\        (i.e., $E^{\prime} \subseteq 2^V \setminus (E \cup \{\emptyset\})$ with $|E^{\prime}| = k$),
        \item \textbf{Aim to:} 
        Accurately predict hyperedges that are missing or are to emerge in the future in $H$.
        \end{itemize}
        }
        }

\subsection{Datasets}
\label{ch3sec:prelim:data}
\noindent

We use 10 real-world hypergraphs from five different domains, summarized in Table~\ref{tab:datasets}.\footnote{Source 1: \url{https://www.cs.cornell.edu/~arb/data}\\\hspace*{8.8pt}
Source 2: \url{https://github.com/HyunjinHwn/SIGIR22-AHP}}
{Among them, the Email, Contact, and Tags datasets contain edge timestamps.}
\begin{itemize}[leftmargin=*,topsep=0pt]
\setlength\itemsep{0em}
    \item \textbf{Co-citation (Citeseer, Cora)}: Each hyperedge consists of publications (nodes) cited by the same paper.
    \item \textbf{Authorship (Cora-A, DBLP-A)}: Each hyperedge consists of publications (nodes) by the same author.
    \item \textbf{Email (Eu, Enron)}: Each hyperedge consists of the sender (node) and all receivers (nodes) of an email.
    \item \textbf{Contact (High, Primary)}: 
    Each hyperedge includes people (nodes) in a group interaction recorded by wearable sensors.
    \item \textbf{Tags (Math.sx, Ubuntu)}:
    Each hyperedge consists of the set of tags (nodes) added to the same question.
\end{itemize}

\begin{table}[t]
\caption{\textbf{Summary of real-world hypergraph datasets.}}\label{tab:datasets}
\centering
\small
\setlength{\tabcolsep}{8pt}
\renewcommand{\arraystretch}{0.92}
\begin{adjustbox}{max width=\linewidth}
\begin{tabular}{llcccc}
    \toprule
    \textbf{Domain} & \textbf{Dataset} & \textbf{\# Nodes} & \textbf{\# Hyperedges} & \textbf{Timestamps} \\ \midrule
    \multirow{2}{*}{Co-citation} & Citeseer & 1,457 & 1,078 & \\
    & Cora & 1,434 & 1,579 & \\ \midrule
    \multirow{2}{*}{Authorship} & Cora-A & 2,388 & 1,072 & \\
    & DBLP-A & 39,283 & 16,483 & \\ \midrule
    \multirow{2}{*}{Email} & Enron & 143 & 10,883 & \multirow{2}{*}{\checkmark} \\
    & Eu & 998 & 234,760 & \\ \midrule
    \multirow{2}{*}{Contact} & High & 327 & 172,035 & \multirow{2}{*}{\checkmark} \\
    & Primary & 242 & 106,879 & \\ \midrule
    \multirow{2}{*}{Tags} & Math.sx & 1,629 & 822,059 & \multirow{2}{*}{\checkmark} \\
    & Ubuntu & 3,029 & 271,233 & \\
    \bottomrule
\end{tabular}
\end{adjustbox}
\end{table}

\section{Observations}
\label{ch3sec:observation}
    This section highlights our key observations in real-world hypergraphs.
    We shall later design the scoring function in the proposed method (see Sect.~\ref{ch3sec:proposed}) based on these observations.

\subsection{Significant Overlap between Hyperedges}
\label{ch3obs:obs1}\label{ch3obs:obs2}
    Our first observation is about the \textit{overlap} between new hyperedges (which we aim to predict) and observed (training) hyperedges.
    We define the \textit{overlap ratio} between a new hyperedge $e'$ and an observed hyperedge $e$ as 
    $OverlapRatio(e^{\prime}, e) := {| e^{\prime} \cap e |} / {| e |}$.
    
    \smallsection{Setup:}
    We randomly split the hyperedge set $E$ in each dataset into two subsets: 
    observed hyperedges $E_1$ ($80\%$ of $E$) and (ground-truth) new hyperedges $E_2$ (remaining $20\%$ of $E$), using five random seeds.
    For each new hyperedge in $E_2$, we compute the proportion of observed hyperedges in $E_1$ with different overlap-ratio thresholds: $>0$, $\geq2/3$, $\geq3/4$, and $1$, and report the values averaged
    over all hyperedges in $E_2$.
    
    As a baseline for comparison, we computed the same statistics on random hyperedges generated by a generalized Chung-Lu (CL) model \cite{chung2002average}, which preserves both node-degree and hyperedge-size distributions of the ground-truth $E_2$.
    Specifically, in each hyperedge, nodes are replaced with randomly drawn nodes, where each node is selected independently with probability proportional to its degree.
    
    \smallsection{Observations:}
    As shown in Fig.~\ref{ch3fig:observations}(a), the proportions of overlapping observed hyperedges for low threshold ($>0$) show no significant difference between ground-truth hyperedges and the random ones.
    However, for higher thresholds ($\geq 2/3$, $\geq 3/4$, and $=1$), ground-truth hyperedges exhibit significantly more overlap with observed hyperedges than the random ones.
    This indicates that new hyperedges are more likely to substantially overlap with existing hyperedges than random hyperedges are, and this difference is statistically significant.~\footnote{
    At a higher threshold (spec., $\geq 2/3$), the Kolmogorov–Smirnov (KS) distance between the overlap-ratio distributions of ground-truth and random hyperedges is statistically significant ($KS = 0.26$, $p < 0.01$), confirming that the two distributions are clearly distinguishable with high confidence.}
\subsection{Temporal Bias in Structural Overlap}\label{ch3obs:obs3}

Our second observation is about the impact of edge timestamps on structural overlap between hyperedges.

\smallsection{Setup:}
We partition the hyperedges $E$ evenly into five equal-sized groups based on their timestamps, arranged in temporal order, where the first group $G_1$ contains the earliest hyperedges, the last group $G_5$ contains the most recent (latest) ones, etc.
For each pair of groups $G_i$ and $G_j$ with $i < j$, we see the hyperedges in $G_i$ as observed ones and the ones in $G_j$ as new ones, and compute the average overlap ratio (see Sect.~\ref{ch3obs:obs1}) between $e' \in G_j$ and $e \in G_i$.
We further average the ratios w.r.t. the time gap $j - i$ (larger $j - i$ means longer time gaps).

\smallsection{Observations:}
As shown in Fig.~\ref{ch3fig:observations}(b), the structural overlap between hyperedges increases as the time difference decreases, indicating that new hyperedges tend to overlap more substantially with recent existing ones than with earlier ones.
\section{Proposed Method: \methodthree}
\label{ch3sec:proposed}

In this section, we introduce the proposed method, \methodthree, a search-based method for hyperedge prediction.

\subsection{Overview (Alg.~\ref{ch3alg:framework})}\label{sec:method:overview}
An overview of \methodthree is presented in Algorithm~\ref{ch3alg:framework}.
Given an observed hypergraph $H=(V, E)$ (optionally with edge timestamps $t_j$'s), \methodthree predicts new hyperedges by identifying the $k$ high-scoring hyperedge candidates, determined by a scoring function designed based on our observations (see Sect.~\ref{ch3sec:observation}).
Here, $k$ is given as an input in the problem statement (see Sect.~\ref{ch3sec:problem}) and is not a hyperparameter of \methodthree we need to fine-tune.

\methodthree first determines the target number $k_i$ of hyperedges to predict for each size $i$ as $k_i := {\operatorname{round}\left(k \frac{| \{ e \in E : | e | = i \} |}{|E|} \right)}$, to match the original hyperedge size distribution in $E$.
Then, \methodthree performs a depth-first search (DFS) over the space of hyperedge candidates
to identify the top-$k_i$ hyperedges for each size, based on the scoring function. However, the original scoring function is not anti-monotonic (i.e., a subset may have a lower score than its supersets), making naïve pruning strategies ineffective.
To address this challenge, \methodthree employs an anti-monotonic upper bound of the original scoring function, which we derive. This allows \methodthree to efficiently explore the search space with theoretical guarantees.

    
We shall describe two key components of \methodthree: 
(1) scoring based on empirical patterns that accurately identifies promising hyperedges and 
(2) search with an anti-monotonic upper bound that enhances computational efficiency.

\vspace{-2mm}
\subsection{Scoring Based on Observations}\label{ch3sec:scoring}
We design the score for each candidate hyperedge $e^{\prime}$ based on our empirical observations in Sect.~\ref{ch3sec:observation}.

\subsubsection{Prioritizing Candidates with High Overlap}\label{ch3sec:relaxed}\label{ch3sec:overlap}
    \hfill\\
    
    \noindent\fbox{
        \parbox{0.95\columnwidth}{
            \textbf{Key Idea From Obs. 1 (Sect.~\ref{ch3obs:obs1}).}
            Ground-truth new hyperedges are likely to significantly overlap with many observed hyperedges.
            Therefore, in our scoring, we should prioritize candidates with high overlap.
        }
    }
    \vspace{0.5mm}

\smallsection{Relaxed Overlap Count.}
Inspired by frequent itemset mining~\cite{poernomo2009towards}, we consider \textit{relaxed overlap count}, controlled by three relaxation ratios between $0$ and $1$ (larger ratio means more relaxed):
(1) node relaxation ratio $\epsilon_v$,
(2) hyperedge relaxation ratio $\epsilon_e$, and
(3) total relaxation ratio $\epsilon_t$.
Let $E$ be the set of observed hyperedges, for each candidate hyperedge $e'$, we say a subset $\widetilde{E} \subseteq E$ satisfies the criteria for $e'$ w.r.t. the three relaxation ratios, if:
    (1) for each node $v'$ in $e'$, it is missing in at most $\epsilon_v$ proportion of the hyperedges in $\widetilde{E}$, i.e.,
    $\forall v^{\prime} \in e^{\prime}, | \{ e \in \widetilde{E}: v^{\prime} \notin e \} | \leq {\epsilon}_v | \widetilde{E} |$;
    (2) in each hyperedge $e \in \widetilde{E}$, at most $\epsilon_e$ proportion of the nodes in $e'$ are missing, i.e.,
    $\forall e \in \widetilde{E}, |v' \in e' : v' \notin e| \leq \epsilon_e |e'|$;
    and (3) the total missing occurrences of nodes in $e'$ in $\widetilde{E}$ has proportion at most $\epsilon_t$, i.e.,
    $|v' \in e', e \in \widetilde{E} : v' \notin e| \leq \epsilon_t |e'| |\widetilde{E}|$.    

We let ${\widetilde{E}}(e', \epsilon_v, \epsilon_e, \epsilon_t)$ be the maximum subset satisfying the above conditions, and define the \textit{relaxed overlap count} as $ovr(e', \epsilon_v, \epsilon_e, \epsilon_t) = |{\widetilde{E}}(e', \epsilon_v, \epsilon_e, \epsilon_t)|$.~\footnote{Considering such relaxation enhances the flexibility in scoring.
Specifically, when relaxation is disabled, i.e., when $\epsilon_v = \epsilon_e = \epsilon_t = 0$, $ovr(e', 0, 0, 0)$ would be simply the number of observed hyperedges that are supersets of $e'$.}
The set $\widetilde{E}$ consists of hyperedges that collectively overlap with the nodes in $e'$. For example, if most hyperedges overlap only with a subset of $e'$, they do not sufficiently support the likelihood of the entire $e'$ and are therefore excluded from $\widetilde{E}$ due to the first constraint.
Similarly, hyperedges that overlap only marginally with a small portion of $e'$ are excluded due to the second constraint.
If we use $E$ instead of $\widetilde{E}$ for scoring (e.g., for the overlap count), however,
such hyperedges, which do not meaningfully support the likelihood of $e'$, increase its score, degrading the validity of the score.



        The set ${\widetilde{E}}(e', \epsilon_v, \epsilon_e, \epsilon_t)$ can be computed by solving an Integer Linear Programming (ILP) problem.~\footnote{In our implementation, we solve this ILP formulation using the SCIP solver~\cite{achterberg2009scip} provided via Google OR-Tools.}
        Let the binary decision variable $x_j \in \{0,1\}$ for each $e_j \in E$ indicate whether $e_j$ is included in $\widetilde{E}$ ($x_j=1$) or not ($x_j=0$), and let $A \in \{0,1\}^{|e'| \times |E|}$ be a binary matrix where $A_{i,j}$ indicates whether $v_i \notin e_j$ ($A_{i,j}=1$) or not ($A_{i,j}=0$).
        Then, the above constraints are formulated as follows:
        \begin{enumerate}[leftmargin=*]
            \item \textbf{Node relaxation constraint:}
            \begin{equation}
                \sum\nolimits_{j=1}^{|E|} A_{i,j} \cdot x_j \leq \epsilon_v \cdot \sum\nolimits_{j=1}^{|E|} x_j \quad \forall i \in [|e'|].
            \end{equation}
            \item \textbf{Hyperedge relaxation constraint:}
            \begin{equation}
                \sum\nolimits_{i=1}^{|e'|} A_{i,j} \leq \epsilon_e \cdot |e'| \quad \forall j \text{ such that } x_j = 1.
            \end{equation}
            \item \textbf{Total relaxation constraint:}
            \begin{equation}
                \sum\nolimits_{i=1}^{|e'|} \sum\nolimits_{j=1}^{|E|} A_{i,j} \cdot x_j \leq \epsilon_t \cdot |e'| \cdot \sum\nolimits_{j=1}^{|E|} x_j.
            \end{equation}
        \end{enumerate}
        The objective then is to maximize the size of $\widetilde{E}$, i.e.,
        \begin{equation}
            \max \sum\nolimits_{j=1}^{|E|} x_j.
        \end{equation}

    \smallsection{Incorporating Overlap Ratio.}        
    Relaxed overlap count only accounts for the \textit{number} of observed hyperedges that satisfy the criteria, so we further incorporate \textit{overlap ratio} (see Sect.~\ref{ch3obs:obs1}) into the scoring function to capture the degree of overlap of each observed hyperedge.
    The final score is
    \begin{equation}    
    f_1(e') = \sum\nolimits_{e \in \widetilde{E}(e^{\prime}, {\epsilon}_v, {\epsilon}_e, {\epsilon}_t)} {\frac{| e^{\prime} \cap e |}{| e |}}.
    \end{equation}
        
        

\vspace{2mm}
\subsubsection{Weighting More Recent Observed Hyperedges}\label{ch3sec:time}
    \hfill\\
    
    \noindent\fbox{
        \parbox{0.95\columnwidth}{
            \textbf{Key Idea From Obs. 2 (Sect.~\ref{ch3obs:obs3}).}
            Ground-truth new hyperedges are likely to overlap with more recent observed hyperedges than earlier ones.
            Therefore, in the scoring, we should add higher weights to more recent observed hyperedges.            
        }
    }
    \vspace{0.5mm}

    \smallsection{Time weight.}
    We introduce \textit{time weight} that assigns greater significance to more recent hyperedges, when edge timestamps are available.
    Formally, the time weight for an observed hyperedge $e$ is defined as:
    $\exp(\tau t_{e})$,    
    where $\tau$ is an adjustable parameter that determines the emphasis on recent hyperedges, and $t_{e} \in [0, 1]$ is the normalized timestamp of $e$.
    With time weight included, the final score is
    \begin{equation}
        f_2(e') = \sum\nolimits_{e \in \widetilde{E}(e^{\prime}, {\epsilon}_v, {\epsilon}_e, {\epsilon}_t)} {\frac{| e^{\prime} \cap e |}{| e |}} \exp(\tau t_{e}).
    \end{equation}
    

    \normalem
    \begin{algorithm2e}[t!]
    \DontPrintSemicolon
    \small
    \LinesNotNumbered
    \KwIn{
    (1) $H=(V, E)$: observed hypergraph \tcp*{Sect.~\ref{ch3sec:problem}} \\
    \textit{\textcolor{gray}{(optionally with edge timestamps $t_j$'s)}} \\
    (2) $k$: target number of hyperedges to predict \tcp*{Sect.~\ref{ch3sec:problem}}
    (3) (${\epsilon}_v$, ${\epsilon}_e$, ${\epsilon}_t$): relaxation ratios
    \tcp*{Sect.~\ref{ch3sec:overlap}}
    \textit{\textcolor{gray}{(4) $\tau$: growth constant for time weight}}  
    \tcp*{Sect.~\ref{ch3sec:time}}
    }
    \LinesNumbered
    \KwOut{$E^{\prime}$: predicted set of new hyperedges}    
    $i_{max} \gets \max\{ |e| : e \in E \}$ \tcp*{maximum hyperedge size}    
    $k_i \gets {\operatorname{round}\left(k \frac{| \{ e \in E : | e | = i \} |}{|E|} \right)}, \forall i \leq i_{max}$ \tcp*{targets; Sect.~\ref{sec:method:overview}}
    ${\theta}_i \gets 0, \forall i \leq i_{max}$ \tcp*{initialize top-$k$ thresholds}
    $E_{i}^{\prime} \gets \emptyset, \forall i \leq i_{max}$ \tcp*{initialize outputs}
    \ForEach{$v \in V$}{
        $\mathcal{S} \gets [(v, \{v\})]$ \tcp*{initialize DFS; Sect.~\ref{ch3sec:topk}}
        \While{$\mathcal{S} \neq \emptyset$}{
            pop $(u, e^{\prime})$ from $\mathcal{S}$ \tcp*{$e^{\prime}$: current candidate}
            $i \gets |e'|$ \tcp*{$i$: current size}
            \textbf{if} $f_t(e') < \theta_i$ \textbf{then} continue \tcp*{prune $e'$ and all its supersets if its score's upper bound $f_t(e')$ does not exceed the top-$k$ threshold; Sect.~\ref{ch3sec:bound}}
            calculate score $f_s(e')$ \tcp*
            {Sect.~\ref{ch3sec:scoring}}
            \If{$|E'_i| < k_i$ \textbf{or} $f_s(e') \geq \theta_i$}{
                insert $e'$ into $E'_i$ and retain the top-$k_i$ by score\;
                $\theta_i \gets \min_{e \in E'_i} f_s(e)$ \tcp*{update $\theta_i$; Sect.~\ref{ch3sec:topk}}
            }
            \textbf{if} {$|e^{\prime}| = i_{max}$} \textbf{then} continue \tcp*{too large size}            
            \ForEach{$w \in V \setminus e^{\prime}$}{                push $(w, e^{\prime} \cup \{ w \})$ into $\mathcal{S}$ \tcp*{proceed DFS}
            }
        }            
        }
        $E' \gets \bigcup_{i \leq i_{max}} E'_i$ \tcp*{collect outputs}
    \KwRet{$E'$}
    \caption{Overview of \methodthree}\label{ch3alg:framework}
    \end{algorithm2e}
    \ULforem

\subsubsection{Final Scoring Function}\label{ch3sec:finalscoring}
For datasets with edge timestamps, the final score for a hyperedge candidate $e'$ is:
\begin{equation} \label{ch3eq:finalscoring}
        f_s(e^{\prime}) = f_2(e') = \sum\nolimits_{e \in \widetilde{E}(e^{\prime}, {\epsilon}_v, {\epsilon}_e, {\epsilon}_t)} {\frac{| e^{\prime} \cap e |}{| e |} \exp({\tau t_{e}})}.
    \end{equation}
When edge timestamps are not available, $\exp({\tau t_{e}})$ is omitted, i.e., we use $f_s(e') = f_1(e') = \sum_{e \in \widetilde{E}(e^{\prime}, {\epsilon}_v, {\epsilon}_e, {\epsilon}_t)} {\frac{| e^{\prime} \cap e |}{| e |}}$.

\subsection{Pruning with Anti-Monotonic Upper Bound}\label{ch3sec:bound}
As discussed in Sect.~\ref{ch3sec:intro}, with an unconstrained candidate set, the naïve exhaustive enumeration of all $\mathcal{O}(2^{|V|})$ candidates and picking the best candidates is computationally infeasible in most cases.
We aim to effectively and efficiently prune unconstrained candidate sets with theoretical guarantees.

For pruning set functions, a useful property is \textit{anti-monotonicity}.
A set function $f(\cdot)$ is anti-monotonic if $f(e_1) \geq f(e_2)$ for any $e_1 \subseteq e_2$.
If a function is anti-monotonic, when we identify a sub-optimal candidate $e'$, we can prune all its supersets, because all its supersets are never better than itself.
However, our original scoring function is not anti-monotonic in general (i.e., for arbitrary relaxation ratios).
To this end, we aim to derive an \textit{anti-monotonic upper bound} $f_n$ of the original scoring function $f_s$.
The key idea is, once we have $f_n$, we can use it as a \textit{safe pruning criterion}:
if a candidate $e'$ satisfies $f_n(e') < \theta$ for some pruning threshold $\theta$, then all supersets $e''$ of $e'$ can be pruned without loss of optimality:
\begin{equation}\label{eq:explain_anti_ub}
    f_s(e'') \overset{\text{upper bound}}{\leq} f_n(e'') 
    \overset{\text{anti-mono.}}{\leq} f_n(e') < \theta, \forall e'' \supseteq e'.
\end{equation}

Specifically, we leverage the following anti-monotonic upper bound $f_n$ of $f_s$:
\begin{equation}\label{eq:anti_ub}
\small
    f_n(e') = \sum\nolimits_{e \in \widetilde{E}(e^{\prime}, {\epsilon}_v, 1, 1)} {\exp({\tau t_{e}})},
\end{equation}
where we only keep the node relaxation ratio $\epsilon_v$ while making the other two ratios maximally relaxed (i.e., with ratio 1), and 
relax the overlap ratio $\frac{| e^{\prime} \cap e |}{| e |}$ to its upper bound of $1$. 
We omit $\exp({\tau t_{e}})$ when timestamps are unavailable.

\begin{theorem}\label{thm:anti_ub}
    $f_n$ is an anti-monotonic upper bound of $f_s$.
\end{theorem}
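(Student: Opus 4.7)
The theorem decomposes into two independent sub-claims that I would verify in turn: (a) the pointwise upper bound $f_n(e') \ge f_s(e')$, and (b) the anti-monotonicity $f_n(e_1) \ge f_n(e_2)$ whenever $e_1 \subseteq e_2$. My plan is to reduce both to a single structural fact about the ILP defining $\widetilde{E}(\cdot)$: its feasible region only shrinks when we tighten any of the three relaxation ratios, or when we enlarge the candidate $e'$.

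For the upper bound, let $\widetilde{E}_s := \widetilde{E}(e', \epsilon_v, \epsilon_e, \epsilon_t)$ and $\widetilde{E}_n := \widetilde{E}(e', \epsilon_v, 1, 1)$. I would first apply the pointwise bound $|e' \cap e|/|e| \le 1$ to drop the overlap-ratio factor from each summand of $f_s(e')$, obtaining $f_s(e') \le \sum_{e \in \widetilde{E}_s} \exp(\tau t_e)$. Next, I would observe that $\widetilde{E}_s$ is feasible in the looser ILP with ratios $(\epsilon_v, 1, 1)$: the node-relaxation constraint is unchanged, while the hyperedge and total relaxation constraints become vacuous once $\epsilon_e = \epsilon_t = 1$. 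Hence $\widetilde{E}_s$ is itself a feasible candidate in the ILP defining $\widetilde{E}_n$, and $\sum_{e \in \widetilde{E}_n} \exp(\tau t_e) \ge \sum_{e \in \widetilde{E}_s} \exp(\tau t_e)$ whenever $\widetilde{E}_n$ is taken to maximize that weighted sum. For the anti-monotonicity part, if $e_1 \subseteq e_2$, then the node-relaxation constraints enforced by $e_2$'s ILP include all those enforced by $e_1$'s (plus one additional constraint per node of $e_2 \setminus e_1$). Thus every subset of $E$ feasible for $e_2$'s ILP is also feasible for $e_1$'s, and optimizing the same weighted objective $\sum \exp(\tau t_e)$ over a larger feasible region yields $f_n(e_1) \ge f_n(e_2)$.

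The main obstacle is the interpretation of ``maximum subset'' in the definition of $\widetilde{E}(\cdot)$: a cardinality-maximum feasible set need not attain the largest weighted sum $\sum \exp(\tau t_e)$, yet both dominance steps above require weighted-sum comparisons. I would address this by making the convention explicit -- take $\widetilde{E}(\cdot)$ to be any feasible set maximizing $\sum_{e \in \widetilde{E}} \exp(\tau t_e)$, equivalently breaking cardinality ties by total weight -- after which both parts of the argument reduce to monotonicity of the feasible region. The apparent coupling of $|\widetilde{E}|$ on both sides of the node-relaxation constraint poses no real difficulty either, since we never modify membership inside a single ILP instance; we only compare optima across different instances, and feasibility for the tighter instance always certifies a lower bound for the optimum of the looser one.
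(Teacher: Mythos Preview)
Your approach mirrors the paper's: both the upper-bound and the anti-monotonicity claims are reduced to the observation that the ILP's feasible region only shrinks when a relaxation ratio is tightened or when $e'$ is enlarged, so the optimum can only decrease. Where you go further is in flagging the cardinality-versus-weighted-sum issue: the paper states its conclusion as a literal set inclusion $\widetilde{E}(e'',\cdot) \subseteq \widetilde{E}(e',\cdot)$ of the \emph{optimal} sets, which does not actually follow from the feasibility-transfer argument, while you correctly note that only a comparison of optima is justified and that this comparison requires $\widetilde{E}(\cdot)$ to be interpreted as maximizing $\sum \exp(\tau t_e)$ rather than bare cardinality. In that respect your argument is more careful than the paper's own.
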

\begin{proof}
\textbf{(Anti-monotonicity)} 
By definition, a function $f_n$ is anti-monotonic if $e'' \supseteq e'$ implies $f_n(e'') \leq f_n(e')$.
To show that $f_n$ satisfies this property, it suffices to show that for any $e'' \supseteq e'$, 
$\widetilde{E}(e'', \epsilon_v, 1, 1) \subseteq \widetilde{E}(e', \epsilon_v, 1, 1)$.
Indeed, since $e'' \supseteq e'$, the node-level condition imposed by $e''$ is at least as strict as that of $e'$.
That is, if an observed hyperedge $e$ satisfies 
$|\{ \hat{e} \in \widetilde{E} : v' \notin \hat{e} \}| \leq \epsilon_v \cdot |\widetilde{E}|$ for all $v' \in e''$, then the condition trivially holds for all $v' \in e'$, since $e' \subseteq e''$.
Hence, any hyperedge satisfying the constraint for $e''$ also satisfies it for $e'$, 
and the inclusion of support sets follows.
Since $f_n$ sums over these sets, we conclude $f_n(e') \geq f_n(e'')$.

\textbf{(Upper bound)} 
We aim to show that $f_s(e) \leq f_n(e)$ for any candidate hyperedge $e$.
This holds if $\widetilde{E}(e, \epsilon_v, \epsilon_e, \epsilon_t) \subseteq \widetilde{E}(e, \epsilon_v, 1, 1)$ since the overlap ratio in $f_s(e)$ is relaxed to its upper bound of 1 in $f_n(e)$.
Indeed, the definition of $\widetilde{E}$ ensures that increasing the relaxation parameters $\epsilon_e$ and $\epsilon_t$ (i.e., making the edge-level and time-level conditions less strict) can only expand the set of observed hyperedges that satisfy the similarity constraints.
Therefore, any $e \in \widetilde{E}(e, \epsilon_v, \epsilon_e, \epsilon_t)$ also belongs to $\widetilde{E}(e, \epsilon_v, 1, 1)$, implying $\widetilde{E}(e, \epsilon_v, \epsilon_e, \epsilon_t) \subseteq \widetilde{E}(e, \epsilon_v, 1, 1)$.
\end{proof}

Computing $f_n$ is NP-complete~\cite{poernomo2009towards}, and we further derive an upper bound $f_t$ of $f_n$ that allows efficient evaluation:
\begin{equation}\label{eq:practical_ub}
    f_t(e') = \sum\nolimits_{e \in \widetilde{E}(e^{\prime}, 1, 1, \epsilon_v)} {\exp({\tau t_{e}})},
\end{equation}
where we replace the node relaxation condition with the total relaxation condition with the same ratio.
The key idea is that, when only total relaxation is involved, we can efficiently construct $\widetilde{E}(e^{\prime}, 1, 1, \epsilon_v)$ by greedily including hyperedges $e$ in the ascending order w.r.t the number of missing nodes, i.e., $|v' \in e': v' \notin e|$, until the relaxation ratio $\epsilon_v$ is reached.

\begin{lemma}\label{lem:practical_ub}
    $f_t$ is an upper bound of $f_s$.
\end{lemma}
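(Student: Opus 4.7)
The plan is to chain this statement onto Theorem~\ref{thm:anti_ub}: since that theorem already gives $f_s(e') \leq f_n(e')$, it suffices to show $f_n(e') \leq f_t(e')$ for every candidate $e'$, and the desired bound $f_s(e') \leq f_t(e')$ then follows by transitivity.

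The key step I would carry out is a feasibility-region inclusion: every subset $\widetilde{E} \subseteq E$ satisfying the constraints underlying $\widetilde{E}(e', \epsilon_v, 1, 1)$ also satisfies those underlying $\widetilde{E}(e', 1, 1, \epsilon_v)$. Starting from the per-node inequality $|\{e \in \widetilde{E} : v' \notin e\}| \leq \epsilon_v |\widetilde{E}|$ for every $v' \in e'$, I would sum over all such $v'$ to obtain
\[
|\{(v', e) : v' \in e', \, e \in \widetilde{E}, \, v' \notin e\}| \leq \epsilon_v \, |e'| \, |\widetilde{E}|,
\]
which is exactly the total-relaxation constraint with ratio $\epsilon_v$. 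The node- and edge-level constraints on the $f_t$ side carry ratio $1$ and are vacuous, so feasibility transfers wholesale. Hence $S := \widetilde{E}(e', \epsilon_v, 1, 1)$ appearing inside $f_n(e')$ is itself a feasible candidate in the maximization defining $\widetilde{E}(e', 1, 1, \epsilon_v)$; since every weight $\exp(\tau t_e)$ is strictly positive, passing to this enlarged feasible region can only yield a weighted sum at least as large, which gives $f_n(e') \leq f_t(e')$.

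The main subtlety I anticipate is reconciling the cardinality-based definition of $\widetilde{E}$ with the weighted comparison above: when multiple maximum-cardinality feasible subsets exist, different tie-breakers can give different weighted sums. I would handle this by interpreting the $\widetilde{E}$ inside $f_t$ as the one that maximizes $\sum_e \exp(\tau t_e)$ over the feasible region (the natural reading, since $f_t$ is designed as an upper bound and the weights are all positive); once this tie-breaking convention is fixed, the feasibility inclusion above carries the inequality through without further work. The timestamp-free case reduces to the same argument with $\tau = 0$, so that every weight becomes $1$ and the comparison becomes a pure cardinality comparison between the two maximizers.
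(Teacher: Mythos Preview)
Your proposal is correct and follows essentially the same route as the paper: both chain through Theorem~\ref{thm:anti_ub} to reduce the task to $f_n(e') \leq f_t(e')$, and both establish this by summing the per-node constraint $|\{e \in \widetilde{E} : v' \notin e\}| \leq \epsilon_v |\widetilde{E}|$ over all $v' \in e'$ to obtain the total-relaxation constraint, so that any $\widetilde{E}$ feasible for $(\epsilon_v,1,1)$ is feasible for $(1,1,\epsilon_v)$. You are in fact more careful than the paper about the tie-breaking subtlety (the paper simply asserts the set inclusion $\widetilde{E}(e',\epsilon_v,1,1) \subseteq \widetilde{E}(e',1,1,\epsilon_v)$ without addressing how the weighted sum behaves under different maximum-cardinality choices), so your treatment is a slight refinement of the same argument.
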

\begin{proof}
It suffices to show that, $\widetilde{E}(e', {\epsilon}_v, 1, 1) \subseteq \widetilde{E}(e', 1, 1, {\epsilon}_v)$, for any $e'$.
Indeed, for any observed hyperedge $e$ such that $\forall v^{\prime} \in e', | \{ e \in \widetilde{E}: v^{\prime} \notin e \} | \leq {\epsilon}_v | \widetilde{E} |$,
we immediately have
$|v' \in e', e \in \widetilde{E} : v' \notin e| = \sum_{v' \in e'}|\{e \in \widetilde{E} : v' \notin e\}| \leq \sum_{v' \in e'} {\epsilon}_v | \widetilde{E} | =
\epsilon_v |e'| |\widetilde{E}|$.
\end{proof}

Now, $f_t$ can be used instead of $f_n$ for pruning with improved efficiency, while maintaining theoretical guarantees.
If a candidate $e'$ satisfies $f_t(e') < \theta$ for some pruning threshold $\theta$, then all supersets $e''$ of $e'$ can be pruned without loss of optimality
(cf. Eq.~\eqref{eq:anti_ub}):
\begin{equation}\label{eq:explain_practical_ub}
\small
    f_s(e'') \overset{\text{upper bd.}}{\leq} f_n(e'') 
    \overset{\text{anti-mono.}}{\leq} f_n(e') \overset{\text{upper bd.}}{\leq} f_t(e') < \theta.
\end{equation}

In practice, we use $f_t$ to quickly prune sub-optimal candidates, and then use the original scoring function $f_s$ to evaluate the ``surviving'' candidates that pass the filtering of $f_t$.

\subsection{DFS-based Candidate Exploration and Top-$k$ Maintenance}\label{ch3sec:topk}
We use a DFS-based search to explore the whole candidate set, and maintain a list of candidates with the highest scores on the fly.
As mentioned in Sect.~\ref{sec:method:overview}, we first determine the target number $k_i$ of hyperedges to predict for each size $i$, to match the original hyperedge size distribution in $E$.

We use DFS to explore the candidate set, from small candidates to large ones.
Along the search, for each size $i$, we maintain a list $E'_i$ of top-$k_i$ candidates with size $i$ and the corresponding threshold $\theta_i = \min_{e' \in E'_i} f_s(e')$.
For each candidate $e'$ with size $i = |e'|$, we use the efficient upper bound $f_t$ to evaluate it.
If $f_t(e') < \theta_i$, then we can safely skip $e'$ and all its supersets (see Eq.~\eqref{eq:explain_practical_ub}), without loss of optimality.
Otherwise if $e'$ passes the filtering of $f_t$ (i.e., $f_t(e') \geq \theta_i$), we compute the original score $f_s(e')$ and update the top-$k$ list and threshold accordingly.

After checking all candidates, we collect the top-$k_i$ hyperedges from different hyperedge sizes $i$ as the final predictions.
\color{black}

\subsection{Extension with Node Features}\label{ch3sec:nodefeatsim}

In the real world, many hypergraph datasets are accompanied by node features, and the proposed method \methodthree can be easily extended to cases with node features.
Specifically, each node $v_i$ can be associated with a node-feature vector $x_i \in \mathbb{R}^f$ in dimension $f$.
For the real-world datasets used in this work, the nodes in the Co-citation and Authorship datasets are publications (see Sect.~\ref{ch3sec:prelim:data}), and each node is associated with text information (the abstract of the corresponding publication).
We extract one-hot bag-of-word features from such information as node features.

\smallsection{Observations.}
We analyze the similarity between node features within hyperedges, and have the following observations.
For each hyperedge $e$,~\footnote{Specifically, we use each hyperedge $e\in E_{2}$, following the setup in Sect.~\ref{ch3obs:obs1}.
} we compute the Jaccard index between the node features of each node pair in $e$, take the average, and compare it to that in random hyperedges generated as in Sect.~\ref{ch3obs:obs1}.
As shown in Fig.~\ref{ch3fig:obs_nodefeat}, we observe that ground-truth hyperedges exhibit significantly higher node feature similarity between the node pairs than random ones.
This suggests that new hyperedges are more likely to form among nodes with similar features.

\begin{figure}
    \centering        
    \includegraphics[width=0.27\textwidth]{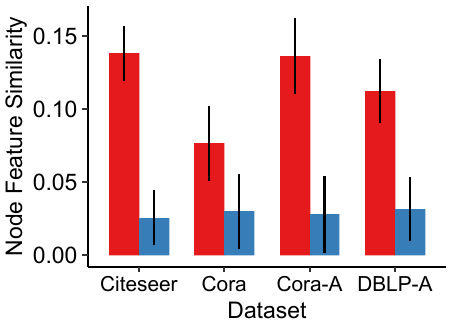}
    \hspace{2pt}
    \raisebox{1.5\height}{\includegraphics[width=0.15\textwidth]{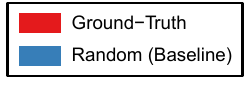}}
    \vspace{-2pt}
    \caption{\label{ch3fig:obs_nodefeat}
        \textbf{\uline{Nodes within a hyperedge have similar features.}}
        Ground-truth hyperedges show higher Jaccard similarity between features of node pairs, suggesting \uline{new hyperedges tend to form between nodes with similar features}.
    }
\end{figure}

\smallsection{Feature Weight.}
Based on our observations above, we introduce \textit{feature weight} that prioritizes candidate hyperedges consisting of nodes with similar features.
For each candidate $e'$, we compute the average Jaccard index 
$X_{e^{\prime}} = \frac{1}{\binom{|e'|}{2}} \sum_{v_i, v_j \in e'} \operatorname{Jaccard}(x_i, x_j)$
between the node pairs in $e'$, together with a tunable hyperparameter $\alpha \geq 0$ to adjust the significance of feature weight.
With feature weight, the final scoring function is $\hat{f}_s(e') = {(X_{e^{\prime}})}^{\alpha} f_s(e')$, and the corresponding upper bound for pruning is
$\hat{f}_t(e') = {(X_{e^{\prime}})}^{\alpha} f_t(e')$.

\smallsection{Discussion.}
A limitation regarding theoretical guarantees, when using node features, is that the average Jaccard index 
$X_{e^{\prime}}$ is not anti-monotonic, and we thus cannot fully guarantee that the pruned candidates are never better than the kept ones.
For theoretical rigor, one may still use ${f}_t$ as the upper bound for pruning, since 
$\hat{f}_s(e'') \leq {f}_s(e'') \leq f_t(e')$, where we have used $(X_{e^{\prime}})^{\alpha} \leq 1$.
However, empirically we observe that using $\hat{f}_t(e')$ shows much better efficiency, which is likely because ${f}_t$ is too loose as an upper bound compared to $\hat{f}_t$.

    

\color{black}


\section{Experiments}
\label{ch3sec:experiments}

In this section, we present experimental results to evaluate the proposed method \methodthree, show its empirical superiority, and analyze the effectiveness of its key components.
The experiments aim to answer the following research questions:

\begin{itemize}[leftmargin=*]
    \item \textbf{Q1. Accuracy}: How accurately does \methodthree predict new hyperedges, compared to baselines?
    \item \textbf{Q2. Speed and Scalability}: How does the running time of \methodthree compare to baselines, and how does it scale with data sizes and the number of hyperedges to predict?
    \item \textbf{Q3. Ablation Studies}: How does each component of \methodthree contribute meaningfully to its prediction accuracy?
    \item \textbf{Q4. Hyperparameter Sensitivity}: How does the accuracy of \methodthree vary with hyperparameter settings?
\end{itemize}

\subsection{Experimental Settings}\label{ch3sec:settings}

    In this section, we present the dataset, method setups, and evaluation metrics. Detailed hardware and software specifications are provided in Sect.~A of the online appendix~\cite{appendix}.

    \begingroup
\renewcommand{\arraystretch}{0.90}
\setlength{\tabcolsep}{4pt}
    \begin{table*}
        \centering
        \caption{\label{ch3tab:Q1_tab1}
           \textbf{Q1. Accuracy.} \uline{\methodthree performs better than all the baselines consistently across all settings.}           
           We report the average values and standard deviations over the five random splits of Recall$@\mathcal{K}$, across the four datasets without edge timestamps.
           For each setting, the best and second-best methods are highlighted in \green{green} and \yellow{yellow}, respectively.           
        }
        \scalebox{0.92}{
        \begin{tabular}{l|ccc|ccc|ccc|ccc}
        \hline
            {\textbf{Dataset}} & \multicolumn{3}{|c|}{\textbf{Citeseer}} & \multicolumn{3}{|c|}{\textbf{Cora}}  & \multicolumn{3}{|c|}{\textbf{Cora-A}} & \multicolumn{3}{|c}{\textbf{DBLP-A}} \bigstrut \\
            \hline
            \textbf{Method ($\downarrow$) / $\mathbf{\mathcal{K}}$ ($\rightarrow$)} & $1\times$ & $2\times$ & $5\times$ & $1\times$ & $2\times$ & $5\times$ & $1\times$ & $2\times$ & $5\times$ & $1\times$ & $2\times$ & $5\times$ \bigstrut \\ 
            \hline
            \methodthree (Proposed) & \cellcolor{green}8.2 (1.6) & \cellcolor{green}10.9 (1.5 & \cellcolor{green}17.9 (1.8) & \cellcolor{green}7.5 (1.8) & \cellcolor{green}10.0 (2.0) & \cellcolor{green}14.6 (1.5) & \cellcolor{green}7.3 (3.6) & \cellcolor{green}10.9 (2.5) & \cellcolor{green}16.4 (2.9) & \cellcolor{green}5.4 (0.1) & \cellcolor{green}8.4 (0.2) & \cellcolor{green}14.3 (0.4) \bigstrut[t] \\ 
             CNS & 1.5 (0.2) & 3.3 (0.8) & 8.8 (1.4) & 2.9 (2.1) & 5.9 (1.5) & \cellcolor{yellow}12.5 (2.1) & 0.3 (0.2) & 0.6 (0.6) & 2.1 (0.8) & 0.7 (0.2) & 1.2 (0.1) & 2.7 (0.2) \\ 
             HPRA & 0.2 (0.4) & 0.3 (0.4) & 0.8 (0.6) & 0.2 (0.2) & 0.6 (0.5) & 2.3 (1.5) & 0.0 (0.0) & 0.1 (0.2) & 0.1 (0.2) & 0.0 (0.0) & 0.0 (0.0) & 0.1 (0.0) \\ 
             MHP & \cellcolor{yellow}2.8 (1.1) & 4.4 (1.3) & \cellcolor{yellow}8.9 (1.4) & 1.2 (0.9) & 2.4 (1.1) & 6.0 (1.6) & \cellcolor{yellow}0.8 (0.2) & \cellcolor{yellow}1.6 (0.2) & \cellcolor{yellow}6.1 (2.8) & - & - & - \bigstrut[b] \\ 
            \hline
             MHP-C & 2.3 (1.0) & \cellcolor{yellow}5.7 (1.7) & - & 4.2 (1.3) & 8.0 (1.5) & - & 0.4 (0.4) & 1.4 (0.7) & 2.6 (0.5) & - & - & - \bigstrut[t] \\ 
             AHP-C & 2.4 (0.9) & 5.2 (1.2) & - & 4.0 (1.0) & \cellcolor{yellow}8.5 (1.8) & - & 0.4 (0.4) & 0.9 (0.6) & 1.7 (0.7) & - & - & - \\ 
             SAGNN-C & 1.8 (0.6) & 4.3 (1.4) & - & 3.8 (1.7) & 7.5 (2.2) & - & 0.3 (0.3) & 0.7 (0.5) & 1.5 (0.6) & 0.7 (0.1) & 1.2 (0.2) & 2.3 (0.4) \\ 
             NHP-C & 2.3 (0.9) & 5.5 (1.2) & - & \cellcolor{yellow}4.2 (1.3) & 7.4 (1.2) & - & 0.4 (0.3) & 0.9 (0.3) & 2.2 (0.6) & \cellcolor{yellow}0.9 (0.2) & \cellcolor{yellow}1.6 (0.2) & \cellcolor{yellow}3.4 (0.2) \\
             MHP-H & 0.3 (0.4) & 0.7 (0.6) & - & 0.6 (0.5) & 1.9 (1.2) & 3.4 (1.4) & 0.1 (0.1) & 0.1 (0.1) & 0.1 (0.1) & - & - & - \\ 
             AHP-H & 0.0 (0.0) & 0.1 (0.1) & - & 0.5 (0.0) & 1.4 (0.0) & 1.8 (0.0) & 0.0 (0.0) & 0.0 (0.0) & 0.0 (0.0) & - & - & - \\
             SAGNN-H & 0.2 (0.2) & 0.4 (0.3) & - & 0.4 (0.4) & 1.2 (0.8) & 2.1 (1.0) & 0.0 (0.0) & 0.0 (0.0) & 0.0 (0.0) & 0.0 (0.0) & 0.0 (0.0) & - \\ 
             NHP-H & 0.1 (0.2) & 0.3 (0.3) & - & 0.6 (0.5) & 1.9 (1.2) & 3.4 (1.5) & 0.1 (0.2) & 0.1 (0.2) & 0.1 (0.2) & 0.0 (0.0) & 0.0 (0.0) & - \bigstrut[b] \\
        \hline
        \multicolumn{13}{l}{-: out-of-time ($>$ 2 days).}
        \end{tabular}
        }
    \end{table*}
    \endgroup

    \begingroup
    \renewcommand{\arraystretch}{0.90}
    \setlength{\tabcolsep}{6pt}
    \begin{table*}
        \centering
        \vspace{-4mm}
        \caption{\label{ch3tab:Q1_tab2}
           \textbf{Q1. Accuracy.} 
           \uline{\methodthree performs better than all the baselines in most settings.}
           We report the Recall$@\mathcal{K}$ across the six datasets with edge timestamps (no standard deviation since there is only one chronological split).
           For each setting, the best and second-best methods are highlighted in \green{green} and \yellow{yellow}, respectively.
        }
        \scalebox{0.92}{
        \begin{tabular}{l|ccc|ccc|ccc|ccc|ccc|ccc}
        \hline
            {\textbf{Dataset}} & \multicolumn{3}{|c|}{\textbf{Enron}} & \multicolumn{3}{|c|}{\textbf{Eu}} & \multicolumn{3}{|c|}{\textbf{High}} & \multicolumn{3}{|c|}{\textbf{Primary}} & \multicolumn{3}{|c|}{\textbf{Ubuntu}} & \multicolumn{3}{|c}{\textbf{Math-sx}} \bigstrut \\
            \hline
            \textbf{Method ($\downarrow$) / $\mathbf{\mathcal{K}}$ ($\rightarrow$)}  & $1\times$ & $2\times$ & $5\times$ & $1\times$ & $2\times$ & $5\times$ & $1\times$ & $2\times$ & $5\times$ & $1\times$ & $2\times$ & $5\times$ & $1\times$ & $2\times$ & $5\times$ & $1\times$ & $2\times$ & $5\times$ \bigstrut \\ 
            \hline
            \methodthree (Proposed) & \cellcolor{green}16.1 & \cellcolor{green}25.6 & \cellcolor{green}33.1 & \cellcolor{green}12.4 & \cellcolor{green}17.3 & \cellcolor{green}26.8 & \cellcolor{green}14.8 & \cellcolor{green}18.3 & 27.3 & \cellcolor{green}7.3 & \cellcolor{green}11.8 & 20.8 & \cellcolor{green}12.0 & \cellcolor{green}15.4 & \cellcolor{green}20.6 & \cellcolor{green}12.1 & \cellcolor{green}17.3 & \cellcolor{green}24.5 \bigstrut[t] \\ 
            CNS & 10.3 & 16.4 & 29.7 & 5.1 & 10.9 & 22.0 & \cellcolor{yellow}12.6 & 13.8 & 18.1 & 4.5 & 7.3 & 11.9 & 1.6 & 2.9 & 6.7 & 3.4 & 6.0 & 11.7 \\ 
            HPRA & 1.7 & 5.8 & 9.2 & 3.5 & 5.8 & 10.2 & 9.0 & 14.8 & 28.1 & 4.7 & 8.1 & 20.4 & 1.1 & 2.0 & 4.4 & 2.2 & 3.9 & 8.1 \\ 
            MHP & 0.3 & 0.6 & 3.6 & 0.3 & 1.0 & 3.4 & 0.9 & 2.9 & 7.4 & 4.3 & 7.7 & 21.1 & - & - & - & - & - & - \bigstrut[b] \\ 
            \hline
            MHP-C & 7.6 & 14.9 & 22.0 & 7.4 & 14.1 & 22.7 & 4.3 & 5.7 & 8.3 & 4.1 & 5.7 & 9.6 & - & - & - & - & - & - \bigstrut[t] \\ 
            SAGNN-C & 6.0 & 8.2 & 14.6 & \cellcolor{yellow}8.6 & \cellcolor{yellow}16.0 & \cellcolor{yellow}24.5 & 7.2 & 8.5 & 9.9 & 5.1 & 7.9 & 11.9 & \cellcolor{yellow}2.3 & \cellcolor{yellow}4.3 & \cellcolor{yellow}8.7 & \cellcolor{yellow}4.7 & \cellcolor{yellow}7.9 & \cellcolor{yellow}14.5 \\ 
            NHP-C & \cellcolor{yellow}13.3 & \cellcolor{yellow}20.3 & \cellcolor{yellow}29.8 & 8.1 & 14.9 & 23.3 & 9.7 & 11.6 & 13.7 & 5.3 & 8.5 & 13.1 & 1.8 & 3.5 & 7.1 & 3.6 & 6.1 & 11.2 \\ 
            MHP-H & 4.6 & 5.4 & 9.5 & 4.4 & 6.5 & 14.2 & 9.1 & 16.3 & 31.8 & 5.4 & 11.4 & \cellcolor{green}22.6 & - & - & - & - & - & - \\ 
            SAGNN-H & 2.8 & 3.4 & 7.1 & 5.4 & 8.1 & 17.8 & 10.2 & \cellcolor{yellow}18.1 & \cellcolor{green}34.1 & 4.5 & 9.4 & 19.5 & 2.0 & 3.5 & - & 3.7 & 6.5 & - \\ 
            NHP-H & 4.5 & 5.2 & 9.5 & 4.6 & 6.7 & 14.2 & 12.4 & 16.6 & \cellcolor{yellow}32.7 & \cellcolor{yellow}6.0 & \cellcolor{yellow}11.7 & \cellcolor{yellow}22.5 & 1.6 & 2.8 & - & 2.6 & 4.6 & - \bigstrut[b] \\ 
        \hline
        \multicolumn{19}{l}{-: out-of-time ($>$ 2 days).}
        \end{tabular}
        }
    \end{table*}
    \endgroup

    \subsubsection{Datasets}
        We used 10 real-world hypergraphs across five domains, as introduced in Sect.~\ref{ch3sec:prelim:data} (see Tab.~\ref{tab:datasets}).
        For datasets with edge timestamps (i.e., the email, contact, and tags datasets), we split the datasets chronologically.
        Specifically, the earliest 80\% of hyperedges were used as observed (training) hyperedges $E_1$, while the most recent 20\% hyperedges were used as new (test) hyperedges $E_2$ we aim to predict.
        For the other datasets (i.e., the co-citation and authorship datasets), hyperedges were randomly split into 80\% observed (training) hyperedges $E_1$ and 20\% new (test) hyperedges $E_2$, using five different random seeds.
        {For hyperparameter tuning, (the most recent or random) 20\% of training hyperedges (i.e., 16\% of the total) were set aside for validation.}

        
            
        For the splits, we make sure that the same hyperedge does not appear in both $E_1$ and $E_2$ (which is possible when repeated hyperedges exist).
        We also make sure that every node in $E_2$ also appears in $E_1$, i.e., there are no unseen nodes in $E_2$.
        The Cora, High, Primary, Ubuntu, and Math datasets contain only hyperedges of size up to 5.
        For computational efficiency, in the other datasets, only hyperedges up to size 10 were retained before splitting into $E_1$ and $E_2$.
        Additionally, hyperedges with rarely occurring hyperedge sizes ($<1$\%) were removed.
        Details are provided in Sect.~B of the online appendix~\cite{appendix}.

    \subsubsection{Method Setups}\label{sec:exp:method_setups}
        All hyperparameters of \methodthree (${\epsilon}_v$, ${\epsilon}_e$, ${\epsilon}_t$, $\tau$, and $\alpha$) were tuned based on validation performance for each dataset,
        and we used \methodthree with the fine-tuned hyperparameters to predict new hyperedges in testing.
        Specifically, we tuned the hyperparameters through a grid search based on the performance w.r.t. Recall$@1\times$ (refer to Sect.~\ref{sec:exp:setting:eval}) on the validation set.
        We considered 27 combinations of the relaxation ratios (${\epsilon}_v$, ${\epsilon}_e$, ${\epsilon}_t$) where each ratio was set to one of $\{(\frac{1}{3}, \frac{1}{4}, \frac{1}{5})\}$ and additionally ${\epsilon}_v= {\epsilon}_e={\epsilon}_t=0.$
        In addition, the hyperparameters related to node feature similarity ($\alpha$) and time weighting ($\tau$) were each selected from the candidate set $\{0, 0.1, 1, 10\}$. 

        We evaluated competing methods, including 
        (1) clique negative sampling (CNS), the strongest sampling-based method from~\cite{patil2020negative}, 
        (2) HPRA~\cite{kumar2020hpra}, and 
        (3) MHP~\cite{yu2024mhp}.         
        The number of predicted hyperedges was set to $\{1, 2, 5\} \times |E_2|$ (see also Sect.~\ref{sec:exp:setting:eval} for evaluation metrics).
        \begin{itemize}[leftmargin=*]
            \item \textbf{CNS}~\cite{patil2020negative}: It generates hyperedges by 
            (1) randomly selecting an observed hyperedge $e$,
            (2) randomly choosing a node in $v \in e$, and
            (3) replacing $v$ with another node that has common neighbors with all the other nodes in $e$.            
            \item \textbf{HPRA}~\cite{kumar2020hpra}: It generates hyperedge by
            (1) randomly selecting an initial node $v_0$,
            (2) iteratively adding nodes with high similarity scores with the already added nodes, based on resource allocation principles (e.g., based on the number of common neighbors).
            \item \textbf{MHP}~\cite{yu2024mhp}: It generates hyperedges by 
            (1) starting from an empty query set, 
            (2) selecting the first node via Preferential Attachment (i.e., high-degree nodes are likely to be chosen), and 
            (3) iteratively adding nodes based on probabilities predicted by a trained model.
        \end{itemize}
        Note that CNS and HPRA have no hyperparameters to tune. S
        ee Sect.~\ref{ch3sec:related} for more details and discussions of these baselines.

        For the baselines that do not use Hypergraph Neural Networks (HNNs), i.e., CNS and HPRA, we further extended them by applying HNNs as a post-processing step to refine hyperedge candidates generated by them.
        We used different HNN models: MHP~\cite{yu2024mhp}, AHP~\cite{hwang2022ahp}, Hyper-SAGNN~\cite{zhang2019hyper}, and NHP~\cite{yadati2020nhp}.
        For each HNN model, the suffix ``-C'' was added when the model was applied to CNS, and the suffix ``-H'' was added when the model was applied to HPRA.
        {For example, MHP-C refers to the approach where MHP is applied to refine hyperedge candidates generated by CNS, while AHP-H indicates that AHP is used to filter hyperedges predicted by HPRA.}
        {In these extended two-stage models, all HNN models were initially trained for the hyperedge prediction task using their respective loss functions in their original implementation.
        In Sect.~C of the online appendix~\cite{appendix}, we provide a summarizes the hyperparameter search spaces used for the HNN models employed in our experiments.
        
        After training, they were applied as a post-processing step to rank and refine the hyperedge candidates generated in the first stage.}
        Specifically, we first used CNS or HPRA to generate twice the target number (i.e., $2k$) of predicted hyperedges, and then used an HNN model to select the top-$k$ ones.

    \subsubsection{Evaluation Metrics}\label{sec:exp:setting:eval}
        To evaluate prediction performance, we used Recall$@\mathcal{K}$, a widely adopted metric in (hyper)edge prediction tasks.
        Recall$@\mathcal{K}$ measures the proportion of ground-truth new hyperedges that are correctly included in the predicted set when $\mathcal{K}$ hyperedges are predicted.
        In this study, the parameter $\mathcal{K}$ is set to represent a multiple of the number of hyperedges in $E_2$.
        For instance, if $\mathcal{K}=2 \times$, the number of predicted hyperedges is $2 \times |E_2|$.

        In addition to Recall$@\mathcal{K}$, we also evaluated prediction accuracy using the average F1 score~\cite{yang2013overlapping}, which considers the similarity between predicted and ground-truth hyperedges.
        The formal definition and the experimental results using the metric are provided in Sect.~D of the online appendix~\cite{appendix}.

\subsection{Q1. Accuracy (Tables~\ref{ch3tab:Q1_tab1} \& \ref{ch3tab:Q1_tab2})}\label{sec:exp:accuracy}
    We evaluated the accuracy of  \methodthree and baselines, and the results are presented in Tables~\ref{ch3tab:Q1_tab1} and \ref{ch3tab:Q1_tab2}.
    
    Notably, \methodthree consistently outperformed all baselines, including deep learning-based models and rule-based approaches (e.g., CNS and HPRA), across most datasets and experimental settings. The only exceptions were in Recall$@5\times$ on the Contact domain datasets (High and Primary), where \methodthree did not achieve the highest performance. Based on our analysis, this is likely because for those datasets, not enough promising (e.g., those with high overlap with the observed ones) candidates are available. A detailed analysis of these cases is provided in Sect.~E of the online appendix~\cite{appendix}.
        
     
      It is also worth noting that  HPRA exhibited scalability limitations, particularly in datasets where the observed hyperedges are divided into multiple connected components (see Sect.~\ref{ch3sec:intro}), performing worse than even simple negative sampling methods (e.g., CNS). In contrast, \methodthree guaranteed flexible exploration of all promising candidates, regardless of the presence of disconnected components. 

      As shown in Sect.~D of the online appendix~\cite{appendix}, the results were consistent in terms of average F1 score.
      
    \smallsection{Case Studies.}
        To qualitatively evaluate the semantic coherence of nodes within predicted hyperedges, we conducted case studies on the two tag datasets, Math and Ubuntu, where the identities of the nodes (tags) are available.
        For each hyperedge size from 2 to 5, we selected the top-scoring predicted hyperedge generated by \methodthree, which are:
        \begin{itemize}[leftmargin=*]
            \item \textbf{Math examples}: [ring-theory, noetherian], [matrices, vectors, vector-spaces], [group-theory, finite-groups, field-theory, abstract-algebra], [calculus, sequences-and-series, real-analysis, integration, convergence]
            \item \textbf{Ubuntu examples}:
                {[drivers, xorg]},
                {[boot, grub2, btrfs]},
                {[dual-boot, boot, live-usb, grub2]},
                {[partitioning, grub2, 16.04, dual-boot, boot]}
        \end{itemize}
        The results show that \methodthree effectively identifies groups (hyperedges) of semantically related tags (nodes).

    \begin{figure*}[t!]
        \centering
        \subfloat[Without edge timestamps]{
            \includegraphics[width=0.334\textwidth]{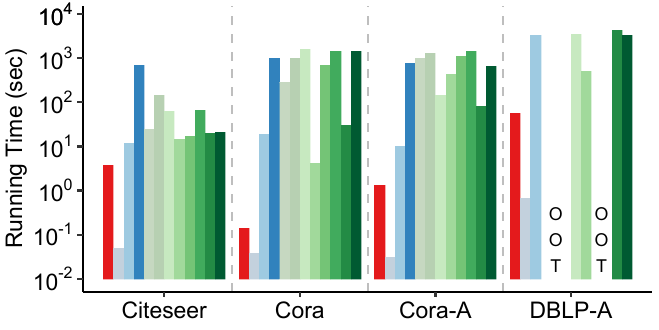}
        }
        \subfloat[With edge timestamps]{
            \includegraphics[width=0.376\textwidth]{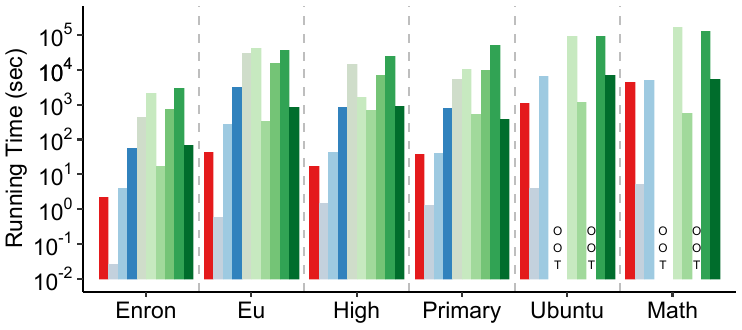}
        }
        \subfloat{
            \raisebox{.6\height}{\includegraphics[width=0.26\textwidth]{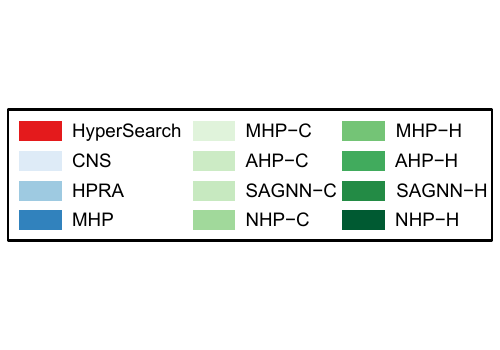}}
        }
    
        \caption{
            \textbf{Q2. Runtime comparison.}
            \uline{\methodthree achieves lower runtime than deep learning-based methods (shown in green) in most cases.}
            For each dataset, we report the running time of \methodthree and baselines.
        }
        \label{ch3fig:time}
        \vspace{-1mm}
    \end{figure*}

\begin{figure*}[t]
        \vspace{-3mm}
        \centering
        \subfloat[\textbf{Relaxed overlap count (Sect.~\ref{ch3sec:relaxed})}\label{ch3fig:ablation1}]{
            \includegraphics[width=0.38\textwidth]{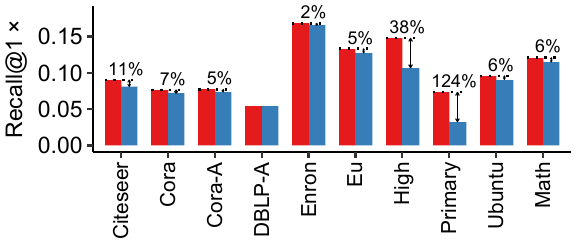}
        }
        \subfloat[\textbf{Time weight (Sect.~\ref{ch3sec:time})}\label{ch3fig:ablation3}]{
            \includegraphics[width=0.255\textwidth]{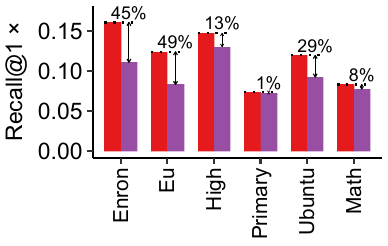}
        }
        \subfloat[\textbf{Feature weight (Sect.~\ref{ch3sec:nodefeatsim})}\label{ch3fig:ablation2}]{
            \includegraphics[width=0.195\textwidth]{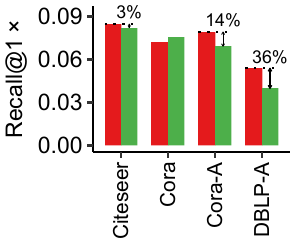}
        }
        \subfloat{
            \raisebox{.77\height}{\includegraphics[width=0.132\textwidth]{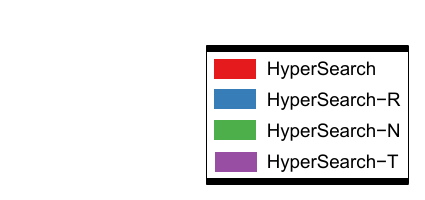}}
        }
        \caption{\label{ch3fig:ablation}
            \textbf{Q3. Ablation studies.}
            \uline{Each component of \methodthree meaningfully contributes to its predictive performance.}
            \methodthree consistently achieves superior or comparable accuracy compared to its variants with some component missing.
        }
    \end{figure*}
    
\begin{figure}[t]
        \centering
        \includegraphics[width=0.6\linewidth]{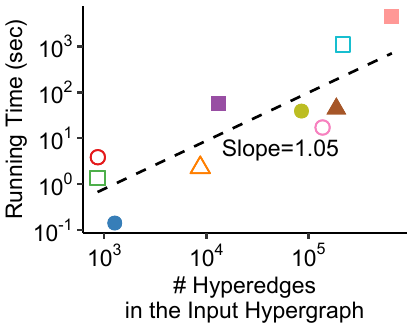}
        \hspace{2pt}
        \raisebox{.33\height}{\includegraphics[width=0.32\linewidth]{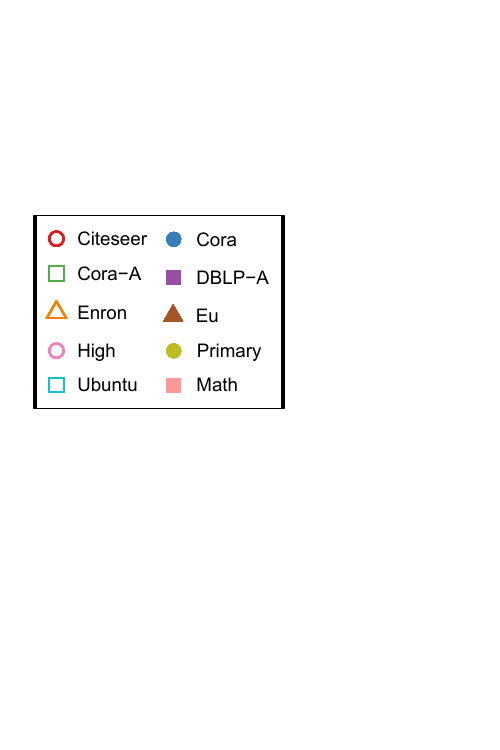}}
        \caption{\label{ch3fig:time_edge}
        \textbf{Q2. Scalability of \methodthree.}
        \uline{The runtime of \methodthree scales nearly linearly with the input hypergraph size, with a regression slope of 1.05, indicating strong scalability.}
        In the plot, we report the number of hyperedges and the running time of \methodthree on each dataset
        }
        \vspace{-1mm}
    \end{figure}

\subsection{Q2. Speed and Scalability (Figs.~\ref{ch3fig:time} \& \ref{ch3fig:time_edge})}\label{sec:exp:speed}
    We evaluated the runtime of the proposed method \methodthree and baselines, and analyzed how \methodthree scales with the size of the input hypergraph.
    The results, presented in Fig.~\ref{ch3fig:time} and Fig.~\ref{ch3fig:time_edge}, reveal the following key observations:
    \begin{itemize}[leftmargin=*]
        \item \methodthree runs faster than deep learning-based methods (highlighted in green in Fig.~\ref{ch3fig:time}) in most cases, demonstrating its superior computational efficiency across diverse datasets.
        \item The runtime of \methodthree scales almost linearly with the number of hyperedges in the input hypergraph, with a regression slope of 1.05 (Fig.~\ref{ch3fig:time_edge}).  
        This suggests that \methodthree maintains scalability as the data size increases.
    \end{itemize}
    In Sect.~G of the online appendix~\cite{appendix}, we show that this scalability persisted on larger synthetic hypergraphs, generated by replicating each original hypergraph up to five times.

\subsection{Q3. Ablation Studies (Fig.~\ref{ch3fig:ablation})}\label{sec:exp:ablation}
    We conducted ablation studies on
    (1) relaxed overlap count in Sect.~\ref{ch3sec:relaxed},
    (2) time weight in Sect.~\ref{ch3sec:time}, and
    (3) feature weight in Sect.~\ref{ch3sec:nodefeatsim}
    to assess their impact on performance.
    Specifically, we compared Recall$@1\times$ between \methodthree and its variants without each component:
    (1) \methodthree-R without relaxation (i.e., $\epsilon_v = \epsilon_e = \epsilon_t = 0$),
    (2) \methodthree-T without time weight (i.e., $\tau = 0$) , and
    (3) \methodthree-N without feature weight (i.e., $\alpha = 0$).
    As shown in Fig.~\ref{ch3fig:ablation}, \methodthree consistently achieves superior or comparable performance compared to its variants, demonstrating that each component meaningfully contributes to its prediction accuracy.

\vspace{-2mm}
\subsection{Q4. Hyperparameter Sensitivity (Fig.~\ref{ch3fig:sensitivity})}
    We analyzed the sensitivity of all hyperparameters in \methodthree, including the relaxation ratios (${\epsilon}_v$, ${\epsilon}_e$, ${\epsilon}_t$), the feature weight ($\alpha$), and the time weight ($\tau$).
    Specifically, we evaluated Recall$@1\times$ by varying (${\epsilon}_v$, ${\epsilon}_e$, ${\epsilon}_t$) across $28$ combinations described in Sect.~\ref{sec:exp:method_setups}
    and by varying $\alpha$ and $\tau$ in $\{0, 0.1, 1, 10\}$, for each dataset.
    
    As shown in Figs.~\ref{ch3fig:sensitivity}(a) and (b), the optimal relaxation ratios (${\epsilon}_v$, ${\epsilon}_e$, ${\epsilon}_t$) vary across datasets.
    Fig.~\ref{ch3fig:sensitivity}(c) shows that, in most cases, \methodthree achieves peak accuracy at specific $\alpha \in \{0.1, 1\}$ for most datasets.
    On the other hand, as shown in Fig.~\ref{ch3fig:sensitivity}(d), increasing $\tau \in \{0, 0.1, 1, 10\}$ monotonically improves the accuracy of \methodthree in 5 out of 6 datasets.
    These results suggest that the impact of hyperparameters depends on dataset characteristics, as different datasets benefit from varying degrees of relaxation, time weight, and feature weight.
    Notably, in our experiments, via validation, we found hyperparameters that gave competitive performance, making \methodthree outperform baselines in most cases.    

    \begin{figure}[t]
        \centering
    
        \begin{minipage}[c]{.49\linewidth}
            \centering
            \includegraphics[width=0.77\textwidth]{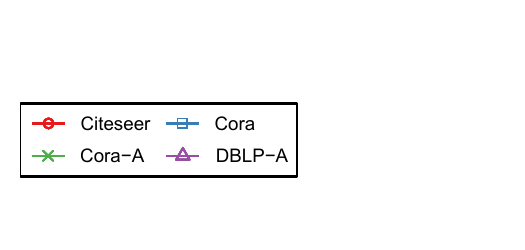}
        \end{minipage}
        \begin{minipage}[c]{.49\linewidth}
            \centering
            \includegraphics[width=0.71\textwidth]{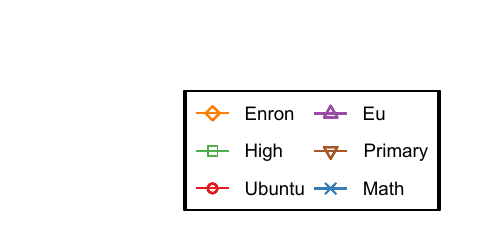}
        \end{minipage}
        \subfloat[Relaxation ratio $\epsilon$'s\\(Without edge timestamps)\label{ch3fig:sensitivity_fault1}]{
            \includegraphics[width=0.48\linewidth]{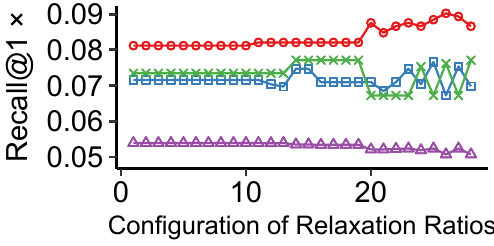}
        }
        \subfloat[Relaxation ratio $\epsilon$'s\\(With edge timestamps)\label{ch3fig:sensitivity_fault2}]{
            \includegraphics[width=0.48\linewidth]{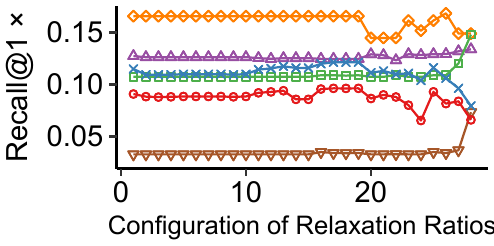}
        }
        \vspace{5pt}
        
        \includegraphics[width=0.47\linewidth]{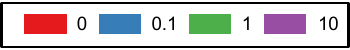}
    
        \vspace{-5pt}
        \subfloat[Feature coefficient $\alpha$\label{ch3fig:sensitivity_node}]{
            \includegraphics[width=0.421\linewidth]{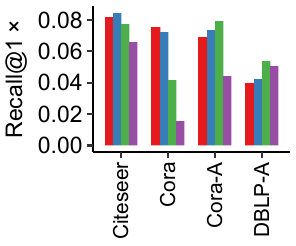}
        }
        \subfloat[Time coefficient $\tau$\label{ch3fig:sensitivity_time}]{
            \includegraphics[width=0.549\linewidth]{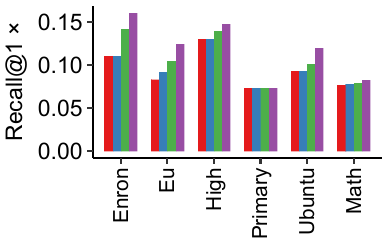}
        }
    
        \caption{
            \textbf{Q4. Hyperparameter sensitivity.}
            \uline{The best hyperparameter choices vary from dataset to dataset.}
            The trends of the accuracy of \methodthree across different hyperparameters (${\epsilon}_v$, ${\epsilon}_e$, ${\epsilon}_t$), $\alpha$, and $\tau$.
        }
        \label{ch3fig:sensitivity}
    \end{figure}

    \subsection{Additional Experiments}
    Due to the page limit, we provide extra experimental results
    in the online appendix~\cite{appendix},
    summarized as follows:
    \begin{itemize}[leftmargin=*]
        \item \textbf{Combination with neural networks (Sect.~F):}  We tried applying a filtering step using hypergraph neural networks to refine the predictions of \methodthree, similar to what we have done for CNS and HPRA (see Sect.~\ref{sec:exp:method_setups}).
        However, this combination did not consistently improve the performance of \methodthree, indicating that its score function based on the patterns in real-world hypergraphs (see Sect.~\ref{ch3sec:observation}) is already effective.
        \item \textbf{Scalability w.r.t. the number of predictions (Sect.~H):} The running time of \methodthree scaled nearly linearly with the number of predictions it produced.
        \item \textbf{Failure analysis (Sect.~E):} We investigated potential reasons for the relatively poor performance of \methodthree on the Contact domain datasets.
        \item \textbf{Additional metric (Sect.~D) and large-scale synthetic datasets (Sect.~G)}: We confirmed that our results remained consistent in terms of average F1 score, and across large-scale synthetic datasets created by replicating hypergraphs.
    \end{itemize}
\section{{Conclusion and Discussion}}
\label{ch3sec:conclusion}
In this work, we studied the problem of hyperedge prediction.
We identified two key limitations of existing methods (Sect.~\ref{ch3sec:intro}).
We analyzed and discussed several observations in real-world hypergraphs on structure and timestamps (Sect.~\ref{ch3sec:observation}).
We proposed a novel method, \methodthree, for hyperedge prediction, consisting of two key components: (1) a scoring function based on our observations to evaluate candidate hyperedges, and (2) an efficient search scheme using an anti-monotonic upper bound of the scoring function (Sect.~\ref{ch3sec:proposed}).
We conducted extensive experiments on 10 real-world hypergraphs across five domains to validate the empirical superiority of \methodthree regarding both accuracy and efficiency (Sect.~\ref{ch3sec:experiments}).

\smallsection{Discussion on Limitations.}
While \methodthree demonstrates strong empirical efficiency, its theoretical complexity is high due to reliance on integer programming~\cite{poernomo2009towards}. 
Moreover, for hypergraphs with features, incorporating the average Jaccard index compromises the anti-monotonicity, as discussed in Sect.~\ref{ch3sec:nodefeatsim}, where we discussed a workaround for theoretical rigor (i.e., use $1$ as a trivial upper bound of the average Jaccard index).
We would like to explore theoretically sound and empirically effective upper bounds when additional information (e.g., node features) is available, as a future direction.

\smallsection{Reproducibility.}
Our code, datasets, and appendix 
are publicly available at \url{https://github.com/jin-choo/HyperSearch/}.

\section*{Acknowledgment}
{\small This work was partly supported by Institute of Information \& Communications Technology Planning \& Evaluation (IITP) grant funded by the Korea government (MSIT) 
(No. RS-2022-II220157, Robust, Fair, Extensible Data-Centric Continual Learning)
(No. RS-2024-00457882, AI Research Hub Project)
(RS-2019-II190075, Artificial Intelligence Graduate School Program (KAIST)).
This research was partly supported by National Research Foundation of Korea (NRF) (RS-2023-00209473), Korea Basic Science Institute (National research Facilities and Equipment Center) grant funded by the Ministry of Science and ICT (RS-2024-00401676).}

\vspace{-1pt}
\normalem
\bibliographystyle{IEEEtran}
\bibliography{080references}



\end{document}